\let\accentvec\vec  
\let\vec\accentvec 
\spnewtheorem{observation}{Observation}{\bfseries}{\itshape}
\newcommand{\N}{\mathbb{N}}
\newcommand{\doubleQ}{\mathbb{Q}}
\newcommand{\mzero}{\mbox{\bf{0}}}
\newcommand{\G}{\mathcal{G}}
\begin{document}

\title{Parameterized Two-Player Nash Equilibrium}

\author{Danny Hermelin \and Chien-Chung Huang \and Stefan Kratsch \and Magnus Wahlstr\"{o}m}
\institute{
Max-Planck-Institute for Informatics, Saarbr\"{u}cken, Germany\\
\email{\{hermelin,villars,skratsch,wahl\}@mpi-inf.mpg.de}
}
\date{}

\maketitle

\begin{abstract}
We study the computation of Nash equilibria in a two-player
normal form game from the perspective of parameterized
complexity. Recent results proved hardness for a
number of variants, when parameterized by the support size. We
complement those results, by identifying three cases in which
the problem becomes fixed-parameter tractable. These cases
occur in the previously studied settings of sparse games and unbalanced games as well as in the newly considered case of locally bounded treewidth games that generalizes both these two cases.

\end{abstract}

\section{Introduction}
\label{Section: Introduction}

Algorithmic game theory is a quite recent yet rapidly  developing discipline that lies at the intersection of computer science and game theory. The emergence of the internet has given rise to numerous applications in this area such as online auctions, online advertising, and search engine page ranking, where humans and computers interact with each other as selfish agents negotiating to maximize their own payoff utilities. The amount of research spent in attempting to devise computational models and algorithms for studying these types of interactions has been overwhelming in recent years; unsurprisingly perhaps, when one considers the economical rewards available in this venture.

The central problem in algorithmic game theory
is that of computing a \emph{Nash equilibrium}, a set of
strategies for each player in a given game, where no player can gain by changing his strategy when all other players strategies remain fixed. This problem is so important because Nash equilibria provide a good way to predict the outcomes of many of the scenarios described above, and other scenarios as well. Furthermore, Nash's Theorem states that for any finite game a mixed Nash equilibrium always exists. However, for this concept to be meaningful for predicting behaviors of rational agents which are in many cases computers, a natural prerequisite is for it to be computable. This led researchers such as Papadimitriou to dub the problem of computing Nash equilibria as one of the most important complexity problems of our time~\cite{Papadimitriou2001}.

The initial breakthrough in determining the complexity of computing Nash equilibria was made by Daskalakis, Goldberg, and Papadimitriou~\cite{DaskalakisGoldbergPapadimitriou2006,GoldbergPapadimitriou2005}. These two papers introduced a reduction technique which was used by the authors for showing  that computing a Nash equilibrium in a four player game is PPAD-complete. Shortly afterwards, this hardness result was simultaneously extended to three player games by Daskalakis and Papadimitriou~\cite{DaskalakisPapadimitriou2005}, and by Chen and Deng~\cite{ChenDeng2005}. The case of two player (bimatrix) games was finally cracked a year later by Chen and Deng~\cite{ChenDeng2006}, who proved it to be PPAD-complete. This implied the existence of a polynomial-time algorithm for the core case of bimatrix games to be unlikely.

Since the result of Chen and Deng~\cite{ChenDeng2006}, the
focus on computing Nash equilibria in bimatrix games was
directed either towards finding approximate Nash equilibria~\cite{BosseByrkaMarkakis2010,ChenDengTeng2006b,ChenDengTeng2006a,ChenTengValiant2007,DaskalakisMehtaPapadimitriou2007,DaskalakisMehtaPapadimitriou2009,LiptonMarkakisMehta2003}, or towards finding special cases where exact equilibria can be computed in polynomial time~\cite{Addario-BerryOlverVetta2007,ChenDengTeng2006a,CodenottiLeonciniResta2006,KalyanaramanUmans2007,LiptonMarkakisMehta2003}.
Nevertheless, for general bimatrix games the best known algorithm for computing either approximate or exact equilibria essentially tries all possibilities for the support of both players (the set of strategies played with non-zero probability), which can be assumed to be at most logarithmic in the approximate case~\cite{LiptonMarkakisMehta2003}. Once the support of both players is known, one can compute a Nash equilibrium by solving a linear-program.

\begin{theorem}[\cite{NRTV2007}]
\label{theorem:computingnash}
A Nash equilibrium in a bimatrix game, where the support sizes are bounded by $k$, can be computed in $n^{O(k)}$ time.
\end{theorem}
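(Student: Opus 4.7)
The plan is to prove Theorem~\ref{theorem:computingnash} by brute-force enumeration over candidate supports, followed by a linear-programming feasibility test for each candidate. Concretely, for every pair $(S_1, S_2)$ of subsets of the two strategy sets with $|S_1|, |S_2| \le k$, I would test whether there is a Nash equilibrium $(x, y)$ with $\mathrm{supp}(x) \subseteq S_1$ and $\mathrm{supp}(y) \subseteq S_2$. Since we are promised that some Nash equilibrium has both supports of size at most $k$, at least one pair will succeed, and any successful pair gives a valid equilibrium to return.

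For a fixed candidate $(S_1, S_2)$, let $A, B \in \mathbb{R}^{n \times n}$ denote the payoff matrices of the row and column players. I would write down a linear program in variables $x \in \mathbb{R}^n$, $y \in \mathbb{R}^n$, $u_1 \in \mathbb{R}$, $u_2 \in \mathbb{R}$ encoding three groups of constraints:
(i) $x_i = 0$ for $i \notin S_1$, $y_j = 0$ for $j \notin S_2$, $x, y \ge 0$, and $\sum_i x_i = \sum_j y_j = 1$;
(ii) for the row player, $(Ay)_i = u_1$ whenever $i \in S_1$ and $(Ay)_i \le u_1$ whenever $i \notin S_1$;
(iii) the symmetric conditions for the column player using $B^{\top}x$ and $u_2$.
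Feasibility of this LP is equivalent to the existence of a Nash equilibrium whose supports are contained in $S_1 \times S_2$: the equalities in (ii) and (iii) enforce the indifference principle on the support (the common value being the equilibrium payoff), while the inequalities enforce the best-response condition off the support. The LP has $O(n)$ variables and $O(n)$ constraints and can therefore be solved in $\mathrm{poly}(n)$ time.

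The running time is then dominated by the enumeration: there are $\binom{n}{\le k}^2 = n^{O(k)}$ candidate support pairs, and each LP feasibility test takes $\mathrm{poly}(n)$ time, giving the claimed bound of $n^{O(k)}$. I do not expect any serious obstacle here, as the argument is essentially a standard guess-and-verify. The one technical point worth double-checking is that using $\mathrm{supp}(x) \subseteq S_1$ rather than $\mathrm{supp}(x) = S_1$ in the LP is the correct notion, which is fine: a Nash equilibrium only requires every pure strategy played with positive probability to be a best response, a condition our LP enforces directly. Enumerating subsets up to size $k$ rather than of exactly size $k$ also ensures that we do not miss equilibria with strictly smaller support.
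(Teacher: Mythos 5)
Your proposal is correct and matches the argument the paper relies on: Theorem~\ref{theorem:computingnash} is quoted from~\cite{NRTV2007} without proof, and the surrounding text describes exactly your method --- enumerate all $n^{O(k)}$ candidate support pairs and, for each, test feasibility of the linear program enforcing indifference on the support and best-response off it. The only nitpick is that feasibility of your LP for a pair $(S_1,S_2)$ is not literally \emph{equivalent} to the existence of an equilibrium with supports contained in $S_1\times S_2$ (your equalities force indifference on all of $S_1$, not just on the actual support), but this does not affect correctness since the exact support pair is among those enumerated.
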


Due to the central role that the algorithm of Theorem~\ref{theorem:computingnash} plays in computing exact and approximate Nash equilibria, it is natural to ask whether one can improve on its running-time substantially. In particular, can we remove the dependency on the support size from the exponent? The standard framework for answering such questions is that of parameterized complexity theory~\cite{DowneyFellows1999,FlumGrohe2006}.
Estivill-Castro and Parsa initiated the study of computing Nash equilibria in this context~\cite{Estivill-CastroParsa2009} . They showed that when the support size is taken as a parameter, the problem is W[2]-hard even in certain restricted settings. The implication of their result is a negative answer to the above question. In particular, combining their reduction with the results of Chen \emph{et al.}~\cite{Chenetal2005} gives a sharp contrast to Theorem~\ref{theorem:computingnash} above.

\begin{theorem}[\cite{Estivill-CastroParsa2009}]
\label{theorem:computingnashlowerbound}
Unless \textnormal{FPT=W[1]}, there is no $n^{o(k)}$ time algorithm for computing a Nash equilibrium with support size at most $k$ in a bimatrix game.
\end{theorem}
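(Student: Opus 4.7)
The plan is to derive the tight lower bound by combining the W[2]-hardness reduction of Estivill-Castro and Parsa with the now-standard $n^{o(k)}$ lower bound for $k$-\textsc{Clique} established by Chen \emph{et al.}~\cite{Chenetal2005}, which asserts that, unless FPT = W[1], no algorithm decides $k$-\textsc{Clique} on an $n$-vertex graph in time $n^{o(k)}$.

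First, I would inspect the reduction underlying Estivill-Castro and Parsa's W[2]-hardness proof and verify that it is \emph{parameter-linear}: from an instance $(G,k)$ of the source problem (\textsc{Dominating Set} or \textsc{Clique}) it must produce, in time polynomial in $n$, a bimatrix game on $N = n^{O(1)}$ strategies whose target support size $k'$ satisfies $k' = O(k)$. If the source problem of the W[2]-hardness proof is not $k$-\textsc{Clique} itself, an intermediate parameter-linear reduction from $k$-\textsc{Clique} (such as the classical reductions compiled in \cite{Chenetal2005}) must be prepended; the point of invoking Chen \emph{et al.} is precisely that all the standard reductions in the W-hierarchy can be arranged to be parameter-linear.

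Second, granted the chain of parameter-linear reductions, I would argue by contraposition. Suppose an algorithm $A$ computes a Nash equilibrium with support size $k'$ in a bimatrix game on $N$ strategies in time $N^{o(k')}$. Composing $A$ with the reduction transforms an $n$-vertex $k$-\textsc{Clique} instance into a bimatrix game on $N = n^{O(1)}$ strategies with $k' = O(k)$, which $A$ solves in time $N^{o(k')} = n^{O(1) \cdot o(k)} = n^{o(k)}$. This contradicts the lower bound of Chen \emph{et al.}, forcing FPT $=$ W[1].

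The main obstacle is verifying the parameter-linear character of Estivill-Castro and Parsa's construction. W[t]-hardness proofs often tolerate a super-constant blow-up in the parameter (e.g.\ $k' = O(k \log n)$ or $k' = k^{O(1)}$), which would only yield a weaker $n^{o(k/\log k)}$ or $n^{o(\sqrt{k})}$ bound. The technical heart of the argument is therefore a careful accounting of the strategy gadgets: one must ensure that each element of the $k$-element solution in the source instance is encoded by only $O(1)$ strategies in the support of the resulting mixed equilibrium, possibly after slightly restructuring the gadget gadgetry to eliminate any logarithmic or polynomial overhead in the reduction's parameter growth.
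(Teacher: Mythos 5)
Your proposal follows exactly the route the paper itself indicates: Theorem~\ref{theorem:computingnashlowerbound} is a cited result, and the paper's only justification is that it follows by ``combining their reduction with the results of Chen \emph{et al.}''~--- i.e., composing the Estivill-Castro--Parsa hardness reduction with the $n^{o(k)}$ lower bound of \cite{Chenetal2005}, precisely as you do, including the essential check that the reduction is parameter-linear so that the exponent is not diluted. Your elaboration is correct and matches the paper's intended argument.
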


The consequence of Theorem~\ref{theorem:computingnashlowerbound} above is devastating in the sense that for large enough games that have equilibriums with reasonably small supports, the task of computing equilibria already becomes infeasible. The main motivation of this paper is to find scenarios where one can circumvent this. Our goal is thus to identify natural parameters which govern the complexity of computing Nash equilibria, and which can help in devising feasible algorithms. We believe that this direction can prove to be fruitful in the quest for understanding the computational limitations of this fundamental problem. Indeed, prior to our work, Kalyanaraman and Umans~\cite{KalyanaramanUmans2007} provided a fixed-parameter algorithm for finding equilibrium in bimatrix games whose matrices have small rank (and some additional constraints).

Our techniques are based on considering a natural graph-theoretic representation of bimatrix games. This is done by taking the union of the underlying boolean matrix of the two given payoff matrices, and considering this matrix as the biadjacency matrix of a bipartite graph. A similar approach was taken by~\cite{CodenottiLeonciniResta2006}, and in particular by~\cite{Addario-BerryOlverVetta2007} who considered games that have an underlying planar graph structure. Our work complements both these results as will be explained further on.

A natural class of games that has a convenient interpretation in the graph-theoretic context is the class of $\ell$-sparse games~\cite{ChenDengTeng2006a,CodenottiLeonciniResta2006,DaskalakisPapadimitriou2009}.
Here each column and row in both payoff matrices of the game
have at most $\ell$ non-zero entries. An initial tempting
approach in these types of games would be to try to devise a  parameterized algorithm with $\ell$ taken as a single parameter. However, Chen, Deng, and Teng~\cite{ChenDengTeng2006a} showed that unless PPAD = P, there is no algorithm for computing an $\varepsilon$-approximate equilibrium for a 10-sparse game in time polynomial both in $\varepsilon$ and $n$. Thus, such an FPT algorithm cannot exist unless PPAD is in P. We complement this result by showing that if $\ell$ is taken as a parameter, and the size of the supports is taken as an additional parameter, then computing Nash equilibrium is fixed-parameter tractable.
\begin{theorem}
\label{theorem:sparsegame}
A Nash equilibrium in a $\ell$-sparse bimatrix game, where the support sizes is bounded by $k$, can be computed in
$\ell^{O(k\ell)} \cdot n^{O(1)}$ time.
\end{theorem}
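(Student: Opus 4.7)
My plan is to enumerate candidate supports by exploiting the bounded-degree structure of the bipartite graph that encodes the nonzero entries of the two payoff matrices.

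\textbf{Graph encoding.} I would construct $G = (V_R \cup V_C, E)$ with $|V_R| = |V_C| = n$, placing an edge $(r,c)$ whenever $A_{rc}\neq 0$ or $B_{rc}\neq 0$. The $\ell$-sparsity hypothesis forces every vertex of $G$ to have degree at most $2\ell$. Any Nash equilibrium with supports $R^*, C^*$ of size at most $k$ induces a bipartite subgraph $H = G[R^*\cup C^*]$ on at most $2k$ vertices with at most $2k\ell$ edges. If $G$ has several connected components, the payoff matrices are block-diagonal along them and the game decomposes into independent bimatrix sub-games; a Nash equilibrium of the full game is then assembled from equilibria of (a subset of) the sub-games by choosing mixing weights that equalize expected payoffs across the chosen components (a small auxiliary linear system once the sub-equilibria are fixed). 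So we may restrict attention to a single connected component of $G$ and distribute the support budget among components.

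\textbf{Enumeration within a component.} Inside a single connected component of $G$, I would first enumerate the ``shape'' of $H$ as an abstract bipartite graph on at most $2k$ vertices with maximum degree at most $2\ell$; there are at most $\ell^{O(k\ell)}$ such shapes. Then, for each connected shape on $h \le 2k$ vertices, enumerate its embeddings in $G$: fixing a single root vertex of $G$ and running a BFS yields at most $(2\ell)^{h-1}$ embeddings per root, so $n\cdot \ell^{O(k)}$ embeddings per shape in total. For every candidate $(R^*, C^*)$ I would solve the Nash indifference linear system restricted to the support, and verify the best-response conditions for rows and columns outside the support; by sparsity, only the $O(k\ell)$ external vertices adjacent to the support interact nontrivially with it, so the verification is polynomial in $n$.

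\textbf{Main obstacle.} The delicate step is handling candidate supports whose induced subgraph $H$ is disconnected inside a single component of $G$. A naive per-piece root enumeration would introduce an $n^c$ factor. The key observation that circumvents this is that if $H$ has connected components $H_1,\ldots,H_c$, then $A_{rc}=B_{rc}=0$ whenever $r\in H_i$ and $c\in H_j$ with $i\neq j$, so the Nash indifference conditions for rows and columns in the support decouple across the $H_i$ exactly as in the $G$-component decomposition above. This allows the decomposition argument to be applied recursively: at the cost of a global best-response check (which remains polynomial by sparsity), the search within a $G$-component reduces to finding connected-support equilibria in sub-instances defined by the closed neighborhoods of the pieces and coupling them through the mixing LP. Combining the $\ell^{O(k\ell)}$ shape enumeration, the $n\cdot \ell^{O(k)}$ embedding enumeration per shape, and the $n^{O(1)}$ verification yields the claimed $\ell^{O(k\ell)}\cdot n^{O(1)}$ bound.
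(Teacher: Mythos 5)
Your setup (the bipartite game graph, the degree bound $2\ell$, enumeration of connected subgraphs by rooted BFS, and an LP check per candidate support) matches the paper's, but the step you yourself flag as the main obstacle is exactly where a proof is missing, and your proposed fix does not work as stated. If the candidate support induces $c$ connected pieces $H_1,\dots,H_c$, you still have to locate all $c$ pieces in $G$, and nothing in your argument bounds $c$: it can be as large as $k$, so the rooted enumeration costs $n^{c}=n^{\Theta(k)}$, not $n^{O(1)}$. The claimed decoupling is also not genuine: although $A_{rc}=B_{rc}=0$ across pieces, the equilibrium conditions couple the pieces through (a) the requirement that the common expected payoff be equal across all played pieces (your mixing weights), and (b) the best-response inequalities for rows and columns \emph{outside} the support, which may be adjacent in $G$ to several pieces simultaneously; so the sub-instances ``defined by the closed neighborhoods of the pieces'' are not independent, and the recursion is not well-founded --- it never reduces the enumeration problem of where the pieces sit. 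The same issue already affects your opening claim that equilibria of a disconnected $G$ can be assembled from sub-game equilibria: a row in an unplayed component receives payoff $0$, so the assembly only goes through when the common payoff is non-negative, which itself requires an argument.

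The paper closes this gap with two ideas absent from your proposal. First, it iterates over support sizes in increasing order, so it may restrict attention to \emph{minimal} equilibria. Second, it proves (Lemmas~\ref{lemma:connected} and~\ref{lemma:connected2}, using a preliminary lemma that small-support equilibria of $\ell$-sparse games have non-negative payoffs) that for a minimal equilibrium the \emph{extended} support $N[S(x)\cup S(y)]$ induces at most \emph{two} connected components: one normalizes the equilibrium onto a single component and shows the result is again an equilibrium, contradicting minimality. This bounds $c\le 2$, so Lemma~\ref{lemma:subgraph finding} enumerates all candidates in $(\Delta+1)^{2t}\cdot n^{2+O(1)}$ time. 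Note also that the connectivity statement concerns the extended support (of size $O(k\ell)$), not the support itself, which can induce many components even for a minimal equilibrium because pieces may be joined only through non-support vertices; this is exactly why the exponent is $O(k\ell)$ rather than $O(k)$ (the $\ell^{O(k)}$ bound is recovered only in the special case of non-negative payoff matrices, Lemma~\ref{lemma:nonnegconnected}), and why enumerating only $2k$-vertex support subgraphs, as you do, cannot be made complete.
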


Note that the above result also complements the polynomial-time algorithms given in~\cite{ChenDengTeng2006a,CodenottiLeonciniResta2006} for 2-sparse games. While in these algorithms there was no assumption made on the size of support of the equilibrium to be found, both algorithms could handle only \emph{win-lose} games~\cite{AbbottKaneValiant2005,ChenTengValiant2007}, games with boolean payoff matrices.

Our second result is concerned with
\emph{$k$-unbalanced games}, games where the row player has a
small set of $k$
strategies~\cite{KalyanaramanUmans2007,LiptonMarkakisMehta2003}.
Lipton, Markakis, and Mehta~\cite{LiptonMarkakisMehta2003} observed that in such games there is always an equilibrium where the row player plays a strategy with support size at most~$k+1$. Thus, by applying Theorem~\ref{theorem:computingnash} one can find a Nash equilibrium in~$n^{O(k)}$ time for these types of games. Can this result be improved to an algorithm running in~$f(k) \cdot n^{O(1)}$ time? We give a partial answer to this question, by showing that if the number~$\ell$ of different payoffs of the row player is taken as an additional parameter, the problem indeed becomes fixed-parameter tractable.
\begin{theorem}
\label{theorem:unbalancedgame}
A Nash equilibrium in a~$k$-unbalanced bimatrix game, where the row player has~$\ell$ different payoff values, can be computed
in~$\ell^{O(k^2)} \cdot n^{O(1)}$ time.
\end{theorem}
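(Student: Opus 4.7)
The plan is to exploit the limited number of distinct row-payoff values to partition the column set into \emph{types}, then enumerate a compact combinatorial skeleton of an equilibrium and solve a polynomial-sized linear program for each candidate.

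Each column $c$ of $A$ induces a row-payoff vector $(A_{1c},\ldots,A_{kc})\in\{v_1,\ldots,v_\ell\}^k$, giving at most $\ell^k$ distinct \emph{types}; two same-type columns are interchangeable from the row player's viewpoint. In any Nash equilibrium, if the column player places positive mass on two same-type columns, both columns already achieve her best-response value, so transferring all the mass of one onto the other preserves both players' payoffs. Iterating, we may assume the column player's support contains at most one column per type. Passing to the induced ``type game'', in which the column player's strategy is her marginal distribution $q$ over types, a Lipton--Markakis--Mehta style LP-basic-feasible-solution argument shows there is an equilibrium in which $q$ is supported on at most $|S_R|+1\le k+1$ types, where $S_R\subseteq[k]$ is the row player's support. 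Hence it suffices to enumerate pairs $(S_R,T)$ with $T$ a set of at most $k+1$ types; the number of such pairs is at most $2^k\cdot\binom{\ell^k}{k+1}\le \ell^{O(k^2)}$.

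For each guess $(S_R,T)$ I would split the equilibrium conditions into two subproblems. Subproblem (a), the row player's best response, depends only on $q$ and reduces to a standard LP in $q\in\Delta(T)$ and row value $u$, with $\sum_{t\in T} q_t a_{it}=u$ for $i\in S_R$ and $\sum_{t\in T} q_t a_{it}\le u$ for $i\notin S_R$. Subproblem (b), the column player's best response, asks for $x\in\Delta(S_R)$ and value $v$ such that $\sum_i x_i B_{ic}\le v$ for every column $c$ with equality attained by at least one column of each type in $T$; given such an $x$, a valid mixed column strategy is obtained by placing mass $q_t$ on any argmax column of type $t$.

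Subproblem (a) is immediate. The main obstacle is subproblem (b): the per-type condition ``some column of type $t$ attains $v$'' is a disjunction, and a naive enumeration of one representative column per type would cost $\prod_{t\in T}|C_t|\le n^{k+1}$, inflating the overall bound to $n^{O(k)}$ rather than the target $n^{O(1)}$. My plan to avoid this is to formulate (b) as a single polynomial-size LP that carries $|T|\le k+1$ representative-column variables together with constraints forcing each representative to be an argmax within its type; the key observation is that the desired $x$ lies on a $(|S_R|-|T|)$-dimensional face of the arrangement of the hyperplanes $\{x:\langle x, b_c-b_{c'}\rangle=0\}$, and such a face together with a point $x$ on it can be identified in polynomial time from $(S_R,T)$ alone. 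Combining this with the enumeration of $(S_R,T)$ yields the claimed $\ell^{O(k^2)}\cdot n^{O(1)}$ bound.
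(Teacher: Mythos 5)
Your combinatorial skeleton is exactly the paper's: you define column types by the row-payoff vector (the paper's equivalence classes, of which there are at most $\ell^k$), argue that some equilibrium uses at most one column per type (the paper's Lemma~\ref{lemma:equivilanceequilibrium}, proved by the same mass-merging argument), invoke the Lipton--Markakis--Mehta bound of $k+1$ on the column support, and arrive at the same enumeration count $2^k\binom{\ell^k}{k+1}=\ell^{O(k^2)}$. Up to that point the proposal is correct and matches the paper step for step, and your subproblem~(a) is just the row-player half of the support-based LP of Theorem~\ref{theorem:computingnash} that the paper runs for each guess.

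The gap is in your resolution of subproblem~(b). You are right that, having guessed only the \emph{types} in the column support, the constraint ``some column of each chosen type attains $\max_c (x^TB)_c$'' is a disjunction over representatives, and that branching over representatives naively costs $n^{O(k)}$. But your proposed fix --- that the desired $x$ lies on a face of the arrangement of the hyperplanes $\{x:\langle x,b_c-b_{c'}\rangle=0\}$ which ``can be identified in polynomial time from $(S_R,T)$ alone'' --- is precisely the claim that needs proof, and nothing in the proposal supports it: this arrangement has $\Theta(n^2)$ hyperplanes in $\mathbb{R}^{|S_R|}$ and $n^{\Theta(k)}$ faces, so knowing the dimension of the sought face does not tell you how to locate it, and no polynomial-time search procedure is described. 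For comparison, the paper does not introduce any of this machinery: for each guessed pair (row support, set of column types) it simply applies the feasibility check of Theorem~\ref{theorem:computingnash} and asserts that each guess is handled in polynomial time. You should either give an actual algorithm for your face-finding step together with a proof that it runs in polynomial time, or restructure the final check so that the guessed data determine a single polynomial-size LP as in the paper; as written, the last step of your argument is asserted rather than proved.
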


Our last result considers bimatrix games whose corresponding graph has a convenient structural property, namely the property of having locally bounded treewidth. Note that both $\ell$-sparse games and $k$-unbalanced games have corresponding graphs with this property, as well as the games with an underlying planar graph structure considered by~\cite{Addario-BerryOlverVetta2007}. We show that in games of locally bounded treewidth, where the payoff matrices have at most $\ell$ different values, one can compute a Nash equilibrium in $f(k,\ell) \cdot n^{O(1)}$. Although this might seem as a generalization of both of our results mentioned above, the reader should note that here we have a stricter requirement on the number of different values in the payoff matrices, and the running-time dependency on both parameters increases much faster.

\begin{theorem}
\label{theorem:localtreewidthgame}
A Nash equilibrium in a locally bounded treewidth game, where the support sizes are bounded by~$k$, and the payoff matrices have at most~$\ell$ different values, can be
computed computed in~$f(k,\ell) \cdot n^{O(1)}$ time for some
computable function~$f()$.
\end{theorem}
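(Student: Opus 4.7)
My plan is to reduce the problem to first-order (FO) model checking on the bipartite game graph $\G$ and to invoke the Frick--Grohe theorem, which decides any FO sentence $\varphi$ in $f(|\varphi|)\cdot n^{O(1)}$ time on classes of locally bounded treewidth. The central observation is that any mixed Nash equilibrium with supports of size at most $k$ is, apart from the actual mixing probabilities, specified by what I shall call a \emph{Nash type}: the two support sizes $k_R,k_C\le k$; the $k_R\times k_C$ payoff submatrices $R',C'$ on the support with entries in $V=\{v_1,\dots,v_\ell\}$; and two sets $P_R\subseteq V^{k_C}$, $P_C\subseteq V^{k_R}$ specifying which payoff profiles are allowed for rows/columns outside the support. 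The total number of Nash types is bounded by $\ell^{O(k^2)}\cdot 2^{2\ell^k}$, a function of $k$ and $\ell$ only.

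For each Nash type I would perform two checks. First, the feasibility of a linear program whose size depends only on $k$ and $\ell$: the variables are the support probabilities and the equilibrium payoffs $v_R,v_C$; the constraints are the standard Nash equalities $R'y=v_R\mone$, $x^\top C'=v_C\mone^\top$, $\sum x_i=\sum y_j=1$, plus the inequalities $p\cdot y\le v_R$ for every $p\in P_R$ and $x\cdot q\le v_C$ for every $q\in P_C$. Second, a realisation check: I express as an FO sentence $\varphi$, of size depending only on $k,\ell$, over the relational structure obtained from $\G$ by adding a row/column unary predicate and binary predicates $P_{R,v}$, $P_{C,v}$ for each value $v\in V$, that there exist $k_R+k_C$ distinct vertices whose induced payoff submatrices match $R',C'$ and such that every other row (respectively column) has its profile against the chosen columns (respectively rows) lying in $P_R$ (respectively $P_C$). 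Since the Gaifman graph of this structure is $\G$, which has locally bounded treewidth, Frick--Grohe decides $\varphi$ in $h(k,\ell)\cdot n^{O(1)}$ time; when some type passes both checks the LP solution yields the equilibrium probabilities.

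The main obstacle, I expect, is the careful encoding of profile-membership inside the FO sentence, in particular handling payoff value~$0$: because $\G$ contains only the pairs with at least one nonzero payoff, the statement $R(u,w)=0$ is not an edge atom but rather $\bigwedge_{v\ne 0}\neg P_{R,v}(u,w)$, so each profile disjunction $\bigvee_{p\in P_R}\bigwedge_j P_{R,p_j}(u,w_j)$ must include an explicit zero-alternative at every coordinate. A secondary point is exhaustiveness of the enumeration: every Nash equilibrium with supports of size at most $k$ induces a type whose $P_R$, $P_C$ equal the sets of outside-row and outside-column profiles actually realised, so that the corresponding LP is feasible with the equilibrium as witness and the realisation check succeeds; this gives the correspondence between Nash equilibria and ``good'' Nash types that makes the type enumeration complete, and the theorem follows by ranging over all $f(k,\ell)$ types and summing the per-type $h(k,\ell)\cdot n^{O(1)}$ cost.
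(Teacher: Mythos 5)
Your proposal is correct and follows essentially the same route as the paper: it enumerates the $f(k,\ell)$ many support patterns together with the sets of payoff profiles permitted for rows and columns outside the support, verifies each candidate with a bounded-size linear feasibility check, and uses Frick--Grohe first-order model checking on the game graph (with the zero value handled by negating the nonzero-value predicates, exactly as in the paper's example) to test realisability. The only cosmetic difference is that the paper packages your LP as an equilibrium check on a small auxiliary game $(C,D)$ via Theorem~\ref{theorem:computingnash}, and justifies restricting to one representative per outside profile by Lemma~\ref{lemma:violatingstrategies}.
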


The paper is organized as follows:
We begin with some preliminaries in Section~\ref{section:preliminaries}. In Section~\ref{section:sparsegames} we consider $\ell$-sparse games and prove Theorem~\ref{theorem:sparsegame}. Section~\ref{section:localtreewidthgame} addresses locally bounded treewidth games and proves Theorem~\ref{theorem:localtreewidthgame}. Finally, in Section~\ref{section:unbalancedgames} we prove Theorem~\ref{theorem:unbalancedgame} regarding $k$-unbalanced games.

\section{Preliminaries}\label{section:preliminaries}

Let~$\mathcal{G}:=(A,B)$ be a bimatrix game, where~$A,B \in \mathbb{Q}^{n \times n}$ are the \emph{payoff matrices} of the \emph{row} and the \emph{column} players respectively. The row (column) player has a strategy space consisting of the rows (columns)~$[n]:=\{1\cdots n\}$. (For ease of notation, except in unbalanced games, we assume that both players have the same number of strategies; different numbers of strategies do not affect any of our results.) The row (column) player chooses a strategy profile~$x$ (\emph{resp.}~$y$), which is a probability distribution over his strategy space, that is,~$x_i,y_j \geq 0$,~$\forall i,j$, and furthermore~$\sum_{i=1}^{n}x_i=1$ and~$\sum_{j=1}^{n}y_j=1$. 
The expected outcomes of the game for the row and the column players are~$x^TAy$ and~$x^T By$ respectively.

The players are rational, always aiming for maximizing their expected payoffs. They have reached a \emph{Nash equilibrium} if the current strategies~$x$ and~$y$ are such that neither player has a deviating strategy~$x'$ resp.~$y'$ such that~$x'^TAy>x^TAy$ resp.~$x^T By'>x^T By$, i.e., if neither of them can improve his payoff independently of the other. The following proposition gives an equivalent condition for a pair of strategies to be an equilibrium.

\begin{lemma}\label{lemma:nashcondition}
\emph{(\cite[Chapter~3]{NRTV2007})} The pair of strategy vectors~$(x,y)$ is a Nash equilibrium for the bimatrix game~$(A,B)$ if and only if
\begin{enumerate}[(i)]
\item~$x_s > 0 \Rightarrow (Ay)_{s} \geq (Ay)_{j}$ for all~$j \neq s$;
\item~$y_s > 0 \Rightarrow (x^{T}B)_{s} \geq (x^{T}B)_{j}$ for all~$j \neq s$.
\end{enumerate}
\end{lemma}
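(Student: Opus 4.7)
The plan is to prove both directions of the equivalence by exploiting the linearity of the expected payoff in each player's mixed strategy. For the forward direction (Nash equilibrium implies (i) and (ii)), I would proceed by contrapositive: suppose condition (i) fails, so there exist indices $s, j$ with $x_s > 0$ but $(Ay)_s < (Ay)_j$. I then construct an explicit deviation $x'$ obtained from $x$ by transferring all mass $x_s$ from row $s$ to row $j$, namely $x'_s = 0$, $x'_j = x_j + x_s$, and $x'_i = x_i$ otherwise. A one-line calculation gives
\begin{equation*}
x'^{T}Ay - x^{T}Ay \;=\; x_s\bigl((Ay)_j - (Ay)_s\bigr) \;>\; 0,
\end{equation*}
contradicting the equilibrium property for the row player. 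Condition (ii) is handled symmetrically by swapping the roles of the players, using that $x^{T}By = \sum_s y_s (x^{T}B)_s$.

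For the backward direction, I would let $M := \max_{j \in [n]} (Ay)_j$. Assuming (i), every index $s$ with $x_s > 0$ satisfies $(Ay)_s \geq (Ay)_{j^*}$ where $j^*$ attains the max, hence $(Ay)_s = M$. Therefore
\begin{equation*}
x^{T}Ay \;=\; \sum_{s=1}^{n} x_s (Ay)_s \;=\; \sum_{s\,:\,x_s > 0} x_s \cdot M \;=\; M,
\end{equation*}
using $\sum_s x_s = 1$. For any alternative row strategy $x'$, the payoff $x'^{T}Ay = \sum_s x'_s (Ay)_s$ is a convex combination of values at most $M$, so $x'^{T}Ay \leq M = x^{T}Ay$, and the row player has no profitable deviation. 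The analogous argument using condition (ii) shows that no deviation $y'$ improves $x^{T}By'$, so $(x,y)$ is a Nash equilibrium.

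The proof is essentially mechanical, and there is no real obstacle. The only conceptual point worth flagging is the reduction from arbitrary mixed deviations to the finite check over pure strategies in~(i) and~(ii): because expected payoffs are linear (hence convex combinations) in each player's strategy, the best response to any fixed opponent strategy is always attained at some pure strategy, which is exactly what turns the quantifier over all deviating mixed strategies into the finitely many pointwise inequalities stated in the lemma.
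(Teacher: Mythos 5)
Your proof is correct. Note that the paper does not prove this lemma at all --- it is quoted from \cite[Chapter~3]{NRTV2007} as a known characterization --- so there is no in-paper argument to compare against; your two-direction argument (constructing a mass-transferring deviation for the contrapositive of the forward direction, and using linearity to reduce arbitrary mixed deviations to the pointwise inequalities for the converse) is exactly the standard textbook proof, and the calculation $x'^{T}Ay - x^{T}Ay = x_s\bigl((Ay)_j - (Ay)_s\bigr)$ together with the convex-combination bound $x'^{T}Ay \leq M$ is complete and gap-free.
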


The \emph{support} of a strategy vector~$x$ is defined as the set~$S(x)=\{i|x_i >0\}$. Note that the above proposition implies that if~$(x,y)$ is a Nash equilibrium, in the column vector~$Ay$, the entries
in~$S(x)$ are equivalent and no less than all other entries not in~$S(x)$; symmetrically, in the row vector~$x^TB$, the entries in~$S(y)$ are equivalent and no less than other entries not in~$S(y)$. It is known that, given possible supports~$I,J\subseteq[n]$ it can be efficiently decided whether there is a matching Nash equilibrium, and the corresponding strategy vectors can be computed via linear programming (Theorem~\ref{theorem:computingnash}).

The following graph associated with a bimatrix game is useful for presenting our algorithms in Sections~\ref{section:sparsegames} and~\ref{section:localtreewidthgame}.

\begin{definition} \label{def:gamegraph}
Let~$\G=(A,B)$ be a bimatrix game with~$A,B\in \doubleQ^{n \times n}$. The undirected (and bipartite) graph~$G=G(\G)$ associated with~$\G$ is defined to be the bipartite graph with vertex
classes~$V_r,V_c:=[n]$, referred to as \emph{row} resp.\ \emph{column} vertices. There is an edge between~$i \in V_r$ and~$j \in V_c$ iff~$A_{i,j}\neq 0$ or~$B_{i,j}\neq 0$.
\end{definition}

As a last bit of notation: For~$I,J \subseteq [n]$, and any~$n \times n$ matrix~$A$, we use~$A_{I,J}$ to denote the submatrix composed of rows in~$I$ and columns in~$J$. We also use~$A_{I,*}$ as a shorthand for~$A_{I,[n]}$. Thus,~$A_{i,*}$ means the~$i$'th row of~$A$.

\section{Sparse Games}
\label{section:sparsegames}

In this section we present the proof for Theorem~\ref{theorem:sparsegame}. Throughout the section we let $\mathcal{G}:=(A,B)$ denote our given bimatrix game, where $A$ and $B$ are rational value matrices with at most $\ell$ non-zero entries per row or column. We will present an algorithm for finding an Nash equilibrium where the support sizes of both players are at most~$k$ (and~$k$ is taken as a parameter).
The high-level strategy is to show that it suffices to search for equilibria that induce one or two connected components in the associated graph~$G=G(\mathcal{G})$. This permits us to find candidate support sets by enumerating subgraphs of~$G$ (on one or two components).

We begin by introducing minimal equilibria:

\begin{definition}
A Nash equilibrium~$(x,y)$ is minimal if for any Nash equilibrium~$(x',y')$
with~$S(x') \subseteq S(x)$ and~$S(y') \subseteq S(y)$, we have~$S(x')=S(x)$ and~$S(y')=S(y)$.
\end{definition}

Our algorithm iterates through all possible support sizes~$k_1,k_2 \leq k$ in increasing order to determine whether there exists an equilibrium $(x,y)$ with $|S(x)|=k_1$ and $S(y)=k_2$. To avoid cumbersome notation, we will assume that $k_1=k_2=k$ (extending this to general case will be immediate). Thus at a given iteration, the algorithm can assume that no equilibrium exists with smaller supports, \emph{i.e.} it can restrict its search to minimal equilibriums. This fact will prove crucial later on. Furthermore, since our game is $\ell$-sparse, our algorithm only needs to search for equilibriums where both player receive non-negative payoffs.

\begin{lemma}
If $\G=(A,B)$ is an $\ell$-sparse game, where $A,B \in \mathbb{Q}^{n \times n}$ and $n > \ell k$, then in any Nash equilibrium with support at most~$k \times k$, both players receive non-negative payoffs.
\end{lemma}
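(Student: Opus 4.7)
The plan is to combine the Nash equilibrium characterization from Lemma~\ref{lemma:nashcondition} with a pigeonhole argument driven by the $\ell$-sparsity. Let $(x,y)$ be any Nash equilibrium with $|S(x)|,|S(y)|\leq k$ and denote the row player's expected payoff by $v_r := x^T A y$. First I would observe that $v_r = \sum_{s\in S(x)} x_s (Ay)_s$, and that by Lemma~\ref{lemma:nashcondition}(i) the value $(Ay)_s$ is the same number for every $s\in S(x)$ and is an upper bound on $(Ay)_i$ for every $i\in[n]$. Thus $v_r$ equals the maximum entry of the vector $Ay$, and in particular $v_r \geq (Ay)_i$ for every row $i$.

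The key step is then a counting argument. Since $\G$ is $\ell$-sparse, each column $A_{*,j}$ contains at most $\ell$ nonzero entries, so for each $j \in S(y)$ the column $A_{*,j}$ can contribute to at most $\ell$ coordinates of $Ay$. Therefore the vector $Ay$ has at most $\ell\cdot|S(y)|\leq \ell k$ nonzero coordinates. By the hypothesis $n>\ell k$, there exists an index $i^*\in [n]$ with $(Ay)_{i^*} = 0$, and combining this with the inequality above yields
\[
v_r \;\geq\; (Ay)_{i^*} \;=\; 0,
\]
so the row player's payoff is non-negative.

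The column player is handled symmetrically: applying Lemma~\ref{lemma:nashcondition}(ii) gives that $v_c := x^T B y$ equals the maximum entry of the row vector $x^T B$, and using that each row of $B$ has at most $\ell$ nonzero entries together with $|S(x)| \leq k < n/\ell$ produces a column index $j^*$ with $(x^T B)_{j^*} = 0$, whence $v_c \geq 0$.

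I do not anticipate a real obstacle here; the only subtlety is to read the direction of the Nash inequality correctly, namely that $v_r$ is a \emph{maximum} over the coordinates of $Ay$, so that a single coordinate guaranteed to be zero by sparsity gives a \emph{lower} bound on $v_r$ rather than an upper bound.
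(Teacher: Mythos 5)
Your proof is correct and is evidently the intended argument: the paper states this lemma without proof (it is not among the proofs in the appendix), and the standard pigeonhole reasoning you give --- the equilibrium payoff equals the maximum coordinate of $Ay$ (resp.\ $x^TB$), which has at most $\ell k < n$ nonzero coordinates and hence a zero coordinate --- is exactly what the sparsity and $n > \ell k$ hypotheses are set up for. No gaps.
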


For an equilibrium~$(x,y)$, let the \emph{extended support} of~$x$ be the rows~$S(x) \cup N(S(y))$,
and similarly for~$y$, where the neighborhood~$N(I)$ is taken over the graph~$G:=G(\G)$ of the game.
Note that any row not in the extended support of~$x$ would have payoff constantly zero
given the current strategy of~$y$, and thus is not important for the existence of an equilibrium.
We will show that for a minimal equilibrium~$(x,y)$, the extended supports of~$x$ and~$y$ induce
a subgraph of~$G$ which has at most two connected components. This will be done in two steps: The first is the special case where~$A_{S(x),S(y)}=B_{S(x),S(y)}=\mzero$, while the second corresponds to the remaining cases.

\begin{lemma} \label{lemma:connected}
If~$(x,y)$ is a minimal Nash equilibrium for a game~$(A,B)$ with $A_{S(x),S(y)} =  B_{S(x),S(y)} = \mzero$, then the subgraphs induced by~$N[S(x)]$ and~$N[S(y)]$ in the graph associated with the game are both connected.
\end{lemma}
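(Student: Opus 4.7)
The plan is to prove the claim by contradiction, exploiting the minimality of $(x,y)$ to rule out a disconnected support. By symmetry it suffices to show that $N[S(x)]$ is connected; suppose it is not. Then $S(x)$ admits a partition $S(x) = S_1 \sqcup S_2$ into two nonempty parts with $N(S_1) \cap N(S_2) = \emptyset$ in $G$. The goal is to restrict $x$ to $S_1$ and renormalize, obtaining $x' := x|_{S_1}/\sum_{i \in S_1} x_i$, and then to verify that $(x', y)$ is still a Nash equilibrium; this contradicts minimality since $S(x') = S_1 \subsetneq S(x)$.

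The first step is to unpack the hypothesis $A_{S(x),S(y)} = B_{S(x),S(y)} = \mzero$. By Definition~\ref{def:gamegraph} this immediately yields $S(y) \cap N(S(x)) = \emptyset$, and the expected equilibrium payoff is zero for both players. Combined with Lemma~\ref{lemma:nashcondition}, I obtain $(Ay)_s = 0$ for every $s \in S(x)$ and $(Ay)_j \leq 0$ for every $j \notin S(x)$, together with the symmetric statements for $x^T B$ indexed over columns.

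Next I check the two Nash conditions of Lemma~\ref{lemma:nashcondition} for $(x', y)$. For condition~(i), since $S(y) \cap N(S_1) = \emptyset$ I still have $(Ay)_s = 0$ for every $s \in S_1$, and this value dominates $(Ay)_j$ for all other $j$ (it equals $0$ on $S_2$ and is non-positive outside $S(x)$). For condition~(ii), take any $t \in S(y)$: because $t \notin N(S_1)$, every $B_{it}$ with $i \in S_1$ vanishes, so $(x'^T B)_t = 0$. For $j \notin S(y)$, if $j \notin N(S_1)$ then $(x'^T B)_j = 0$ trivially; otherwise $j \in N(S_1)$, and the crucial disjointness forces $j \notin N(S_2)$, so $B_{ij} = 0$ for every $i \in S_2$. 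Therefore $\sum_{i \in S(x)} x_i B_{ij} = \sum_{i \in S_1} x_i B_{ij}$, the left side is $\leq 0$ by the original Nash inequality at $j$, and normalizing gives $(x'^T B)_j \leq 0$ as required.

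Thus $(x', y)$ is a Nash equilibrium with strictly smaller support, contradicting the minimality of $(x, y)$; the argument for $N[S(y)]$ is obtained by transposing the roles of the two players. The one delicate point, and the only one not immediate from algebra, is handling $j \in N(S_1) \setminus S(y)$ in condition~(ii): I need both the neighborhood disjointness (to discard $S_2$'s contribution entirely) and the original Nash inequality at $j$ (to inherit a non-positive value), and these combine cleanly precisely because the vertices of $S(y)$ are separated from $N(S(x))$ by the $\mzero$-submatrix hypothesis.
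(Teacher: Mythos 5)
Your proposal is correct and follows essentially the same route as the paper: restrict $x$ to the rows of one connected component of $G[N[S(x)]]$, renormalize, and use the zero payoffs plus the neighborhood disjointness (which forces $B_{ij}=0$ for $i$ in the other components whenever $j\in N(S_1)$) to verify both conditions of Lemma~\ref{lemma:nashcondition} for $(x',y)$, contradicting minimality. The only cosmetic difference is that you phrase the component structure as a partition $S_1\sqcup S_2$ with $N(S_1)\cap N(S_2)=\emptyset$, whereas the paper works directly with a component $C$; the key step identifying $(x^TB)_j = p\,(x'^TB)_j$ is identical.
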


\begin{proof}
Let $G_x:=G[N[S(x)]]$ be the subgraph of $G$ induced by $N[S(x)]$, and aiming towards a contradiction, suppose that $G_x$ is disconnected. Let $C$ be a connected component in $G_x$, and write $p:=\sum_{i \in V_r(C)} x_i$ to denote the probability that a row strategy in $C$ is played according to $x$. Now define a new row strategy by setting~$\hat{x}_i=x_i/p$ if~$i \in V_x$, and~$\hat{x}_i=0$ otherwise.
We argue that~$(\hat{x},y)$ is a Nash equilibrium, contradicting the fact that $(x,y)$ is minimal.

Obviously, the expected payoff in~$(\hat{x},y)$ is zero for both players, as $A_{S(\hat{x}),S(y)}$ = $B_{S(\hat{x}),S(y)}$ = $\mzero$.
Furthermore, there is no row~$i$ such that~$(Ay)_i>0$, since the strategy~$y$ is unchanged and
the original strategy pair $(x,y)$ is an equilibrium. Now assume that there is a column~$j$ such that~$(\hat{x}^TB)_j>0$.
Then~$B_{i,j}\neq 0$ for some~$i \in S(\hat{x})$, and by the connectivity assumption
we must have~$B_{i,j}=0$ for all~$i \in S(x) \setminus S(\hat{x})$,
but then~$(x^TB)_j=p(\hat{x}^TB)_j>0$, which contradicts the assumption that~$(x,y)$ is a Nash equilibrium.
By Lemma~\ref{lemma:nashcondition}, it follows that~$(\hat{x},y)$ is a Nash equilibrium. \qed
\end{proof}

\begin{lemma} \label{lemma:connected2}
If~$(x,y)$ is a minimal Nash equilibrium for a game~$(A,B)$ with either $A_{S(x),S(y)} \neq \mzero$ or $B_{S(x),S(y)} \neq \mzero$, and with a non-negative payoff for both players, then the subgraph induced by~$N[S(x) \cup S(y)]$ is connected.

\end{lemma}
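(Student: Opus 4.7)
I plan to argue by contradiction: assume the induced subgraph on $N[S(x) \cup S(y)]$ has at least two connected components. By the hypothesis that $A_{S(x),S(y)} \neq \mzero$ or $B_{S(x),S(y)} \neq \mzero$, there is an edge in $G$ between some $i^\star \in S(x)$ and some $j^\star \in S(y)$, so both lie in a common component $C$ of the induced subgraph. Writing $S_x^C := S(x) \cap V_r(C)$ and $S_y^C := S(y) \cap V_c(C)$, and setting $p := \sum_{i \in S_x^C} x_i$ and $q := \sum_{j \in S_y^C} y_j$, we have $p, q > 0$. The strategy is to restrict and renormalize $(x,y)$ to the component $C$ to produce a strictly smaller equilibrium, contradicting minimality---mirroring the approach of Lemma~\ref{lemma:connected}.

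Define $\hat{x}$ by $\hat{x}_i := x_i/p$ if $i \in S_x^C$ and zero otherwise, and $\hat{y}$ analogously on $S_y^C$ with scaling $1/q$. The key structural observation is that whenever $i \in S_x^C$ and $j \in S(y)$ satisfy $A_{i,j} \neq 0$ or $B_{i,j} \neq 0$, the edge $(i,j)$ lies in the induced subgraph, so $j \in V_c(C) \cap S(y) = S_y^C$; an analogous statement holds with the roles of rows and columns exchanged. Consequently $(A\hat{y})_i = (Ay)_i/q$ for every $i \in S_x^C$, and $(\hat{x}^T B)_j = (x^T B)_j/p$ for every $j \in S_y^C$. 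Since $(x,y)$ is an equilibrium these values are constant across $S_x^C$ and $S_y^C$ respectively, which yields the equality clause of Lemma~\ref{lemma:nashcondition} for $(\hat{x}, \hat{y})$.

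The main obstacle is the inequality clause, i.e., ensuring that no row or column strictly beats the equilibrium value under the rescaled profiles. For a row $i' \notin S_x^C$ I split into sub-cases: if $i' \in V_r(C)$ then $(A\hat{y})_{i'} = (Ay)_{i'}/q$ is still dominated by the row player's original equilibrium payoff divided by $q$; if $i' \notin V_r(C)$ then $i'$ has no $G$-edge into $S_y^C \subseteq N[S(x) \cup S(y)]$, since such an edge would sit inside the induced subgraph and place $i'$ in $C$, whence $(A\hat{y})_{i'} = 0$. This is exactly where the non-negative payoff hypothesis is used: it guarantees that $0$ does not exceed the (non-negative) equilibrium value, so the zero term in the second sub-case cannot violate the inequality. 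The column player's condition is checked symmetrically.

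Having verified that $(\hat{x},\hat{y})$ is a Nash equilibrium with $S(\hat{x}) \subseteq S(x)$ and $S(\hat{y}) \subseteq S(y)$, minimality forces $S(\hat{x}) = S(x)$ and $S(\hat{y}) = S(y)$, so $S(x) \cup S(y) \subseteq V(C)$. To close the contradiction, observe that any vertex $v$ in a different component $C'$ of the induced subgraph necessarily lies in $N(S(x) \cup S(y)) \setminus (S(x) \cup S(y))$, so $v$ has a $G$-neighbor $w \in S(x) \cup S(y) \subseteq V(C)$; the edge $(v,w)$ lies in the induced subgraph and connects $C'$ to $C$, contradicting $C \neq C'$. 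Therefore $N[S(x) \cup S(y)]$ induces a connected subgraph.
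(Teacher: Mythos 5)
Your proof is correct and follows essentially the same route as the paper: pick the component $C$ containing the edge guaranteed by $A_{S(x),S(y)}\neq\mzero$ or $B_{S(x),S(y)}\neq\mzero$, renormalize $(x,y)$ onto $C$, verify the conditions of Lemma~\ref{lemma:nashcondition} using the fact that no game edge can leave $C$ inside $N[S(x)\cup S(y)]$, invoke non-negativity of the payoffs to handle the rows and columns whose rescaled payoff drops to zero, and contradict minimality. The only (harmless) difference is organizational: the paper notes up front that disconnectedness forces $S(\hat x)\subsetneq S(x)$ or $S(\hat y)\subsetneq S(y)$, whereas you defer this and rule out a second component at the end.
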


\begin{proof}
Let $H$ be the subgraph induced by~$N[S(x) \cup S(y)]$ 
and suppose that~$H$ is disconnected.
We will derive a contradiction by showing that $(x,y)$ is not minimal.

Let $C$ be a connected component in~$H$ intersecting both~$S(x)$ and~$S(y)$, and let $V_{\hat{x}}:=V_r(C) \cap S(x)$ and $V_{\hat{y}}:=V_c(C) \cap S(y)$ denote the set of row and column strategies in $C$, respectively.
 We define a new pair of strategy profiles $(\hat{x},\hat{y})$ where $S(\hat{x}) = V_{\hat{x}}$ and $S(\hat{y}) = V_{\hat{y}}$, by normalizing $(x,y)$ onto $V_{\hat{x}}$ and $V_{\hat{y}}$. That is, we let $p:=\sum_{i \in V_{\hat{x}}} x_i$, and set $\hat{x}_i=x_i/p$ if~$i \in V_{\hat{x}}$, and~$\hat{x}_i=0$ otherwise. Similarly, we let $q:=\sum_{j \in V_{\hat{y}}} y_i$, and set $\hat{y}$ accordingly. As either $S(\hat{x}) \subset S(x)$ or $S(\hat{y}) \subset S(y)$, to prove the lemma it suffices to argue that $(\hat{x},\hat{y})$ is an equilibrium.

Consider a row strategy $i \in [n]$. We claim
\begin{equation}
(A\hat{y})_i =
\begin{cases} (Ay)_i/q & \text{if $A_{i,V_{\hat{y}}} \neq \mzero$,}  \\
              0 &        \text{otherwise.}
\end{cases}
\label{eqn:ayhatcases}
\end{equation}
The second case is clear.  For the first case, assume that~$A_{i,V_{\hat{y}}} \neq \mzero$.
Now, if~$A_{i,j}\neq 0$ for~$j \in S(y) \setminus V_{\hat{y}}$,
then there would be an edge in~$H$ from the vertex corresponding to row~$i$, which is in~$C$,
to the vertex corresponding to column~$j$, which is not in~$C$,
contradicting that~$C$ is a connected component.
Thus
\[
(Ay)_i = \sum_{j \in S(y)} A_{i,j} y_j =
 \sum_{j \in V_{\hat{y}}} A_{i,j} (q\hat{y}_j) + \sum_{j \in S(y) \setminus V_{\hat{y}}} A_{i,j}y_j =
q(A\hat{y})_i + 0
\]
and the claim follows.

Now let $s \in S(\hat{x})$, and consider some arbitrary row strategy $i \in [n]$.
Assume by way of contradiction that~$(A\hat{y})_s < (A\hat{y})_i$.
It is clear that $A_{i,V_{\hat{y}}} \neq \mzero$, thus~$(A\hat{y})_i=(Ay)_i/q$;
we consider the cases for row~$s$.
If $A_{s,V_{\hat{y}}} \neq \mzero$, then by~(\ref{eqn:ayhatcases}), we have~$(A\hat{y})_s=(Ay)_s/q$,
implying~$(Ay)_s < (Ay)_i$.
On the other hand, if~$A_{s,V_{\hat{y}}}=\mzero$ but~$A_{s,S(y)}\neq \mzero$,
then~$C$ would not be a connected component of~$H$ (since a neighbor of~$s$ would be missing).

Thus~$A_{s,S(y)}=\mzero$ and~$(Ay)_s=(A\hat{y})_s=0$, and~$(Ay)_i=q(A\hat{y})_i>0$.
In  both cases we contradict that~$(x,y)$ is an equilibrium.
Thus we fulfill condition~(i) of Lemma~\ref{lemma:nashcondition}, and by symmetry we also fulfill condition~(ii).
We have shown that~$(\hat{x},\hat{y})$ is an equilibrium, contradicting the minimality of~$(x,y)$.
\qed \end{proof}

As an immediate corollary of Lemma~\ref{lemma:connected} and~\ref{lemma:connected2}, we get that the subgraph in $G(\G)$ induced by the extended support of a minimal equilibrium has at most two connected components.
In the following lemma we show that in graphs of small maximum degree, we can find all such subgraphs quite efficiently.
This will allow us to find a small, minimal equilibrium by checking all sets of
rows and columns that would be candidates for being the extended supports of one.

\begin{lemma}
\label{lemma:subgraph finding}
Let~$G$ be a graph on~$n$ vertices and with maximum degree~$\Delta=\Delta(G)$. In time~$(\Delta+1)^{2t}\cdot n^{c+O(1)}$ one can enumerate all subgraphs on~$t$ vertices that consist of~$c$ connected components.
\end{lemma}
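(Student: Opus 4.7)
The plan is to reduce to the single-component case ($c=1$) and then enumerate connected $t$-vertex subgraphs through depth-first search traversals. For the general case, observe that any subgraph on $t$ vertices with exactly $c$ connected components decomposes as a vertex-disjoint union of connected subgraphs whose sizes form an ordered composition $(t_1,\ldots,t_c)$ of $t$ with each $t_i\geq 1$. There are at most $\binom{t-1}{c-1}$ such compositions, and for each composition we independently enumerate a $t_i$-vertex connected subgraph (together with a root vertex) for every part $i$; to avoid outputting the same union multiple times, we order the $c$ chosen subgraphs canonically by the smallest vertex of each.

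For enumerating connected subgraphs on $t$ vertices containing a designated root $v$, we use the fact that every such subgraph admits at least one spanning tree rooted at $v$, and that a DFS traversal of a tree on $t$ vertices with maximum degree $\Delta$ can be encoded as a sequence of at most $2(t-1)$ symbols from the alphabet $\{1,\ldots,\Delta,\mathrm{back}\}$ of size $\Delta+1$: at each step either descend to the $i$-th neighbor (in the fixed adjacency-list order of the current vertex) not yet visited, or backtrack to the parent. Iterating over all $(\Delta+1)^{2(t-1)}$ such sequences and simulating each one starting at $v$ in $O(t)$ time, we produce every connected $t$-vertex subgraph containing $v$, possibly with duplicates (removable by hashing sorted vertex sets, with polynomial overhead).

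Summing over all $n$ choices of $v$, single-component enumeration costs $n\cdot(\Delta+1)^{2(t-1)}\cdot O(t)=(\Delta+1)^{2t}\cdot n^{O(1)}$. In the $c$-component case we additionally multiply by $n^{c-1}$ for the extra root choices and by $\binom{t-1}{c-1}\leq 2^{t}\leq(\Delta+1)^{t}$ for the compositions, which altogether fits inside the claimed bound $(\Delta+1)^{2t}\cdot n^{c+O(1)}$.

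The main subtlety is verifying that our DFS-sequence scheme really does realize every connected subgraph at least once. For any spanning tree $T$ of a connected $t$-vertex subgraph containing $v$, a standard DFS of $T$ starting at $v$ that respects the fixed global adjacency ordering produces exactly one of the $(\Delta+1)^{2(t-1)}$ sequences we enumerate; conversely every sequence generated by the enumerator, if it simulates without failure (i.e.\ every \emph{descend} instruction actually finds an unvisited neighbor), corresponds to a genuine DFS in $G$ and therefore yields a valid connected subgraph. Failed simulations are simply discarded, so correctness of the enumeration is immediate once this encoding argument is in place.
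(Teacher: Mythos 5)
Your proof is correct and follows essentially the same approach as the paper: both arguments enumerate traversal sequences of length at most $2t$ over an alphabet of size $\Delta+1$ (your DFS descend/backtrack tokens versus the paper's select-neighbor/set-passive events on a queue of active vertices), paying an extra factor of roughly $n^{c}$ for the choice of component starting vertices. The only cosmetic difference is that you fix the composition $(t_1,\dots,t_c)$ and all $c$ roots up front, while the paper starts a new component on the fly whenever the active set empties; this does not change the analysis.
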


\begin{proof}
We first show how to enumerate all connected subgraphs on~$t$ vertices in time~$(\Delta+1)^{2t}\cdot n^{O(1)}$ by a branching algorithm. At any point selected vertices will be active or passive. When a vertex is selected it will first be active and later be set to passive. Selecting a vertex and making it active respectively setting a vertex to passive is called an \emph{event}.

First, we branch on the choice of one out of~$n$ starting vertices in~$G$ and set it active. Then until we have selected~$t$ vertices we do the following: We consider the least recently added active vertex and branch on one of at most~$\Delta+1$ events, namely selecting one of its at most~$\Delta$ neighbors (and making it active) or setting the vertex itself to passive. We terminate when we have selected~$t$ vertices (and output the corresponding subgraph) or when there are no more active vertices. Clearly, on each branch of this algorithm there are at most~$2t$ events. Thus the branching tree has at most~$(\Delta+1)^{2t}$ leaves, implying a total runtime of~$(\Delta+1)^{2t}\cdot n^{O(1)}$.
Observe that for every connected subgraph on~$t$ vertices there is a sequence of events such that the graph occurs in the enumeration.

The generalization to~$c$ components is straightforward: When there are no more active vertices but we have not yet selected~$c$ components, then we select one of the less than~$n$ remaining vertices of~$G$ as a new active vertex, starting a new component. Selecting new starting vertices adds a factor of~$n^c$ to our runtime.
\qed\end{proof}

We are now in position to describe our entire algorithm. It first iterates through all possible sizes of extended support in increasing order. In each iteration, it enumerates all subgraphs that might correspond to the extended support of a minimal equilibrium.
It then checks all ways of selecting a support from the given subgraph, and for each such selection it uses the algorithm behind Theorem~\ref{theorem:computingnash} to check whether there is an equilibrium on the support. If no equilibrium is found throughout the whole process, the algorithm reports that there exists no equilibrium with support size at most $k$ in $\G$.
The running time is bounded by~$\ell^{O(k\ell)}n^{O(1)}$ from Lemma~\ref{lemma:subgraph finding}, times~$\binom{k\ell  + k}{k}^2=2^{O(k\ell)}$ ways of selecting the support, times~$n^{O(1)}$ for checking for an equilibrium.
In total, we get a running time of~$\ell^{O(k\ell)}n^{O(1)}$.

Finally, completeness comes from the exhaustiveness of Lemma~\ref{lemma:subgraph finding} and the structure given by Lemmas~\ref{lemma:connected} and~\ref{lemma:connected2}.

\subsection{Games with Non-negative Payoffs}

In the case that the payoffs of our games are non-negative, i.e.,~$A,B \in \doubleQ^{n \times n}_{\geq 0}$,
we can reduce our running time to be polynomial in~$\ell$, for~$\ell$-sparse games.
We begin with a strengthening of Lemmas~\ref{lemma:connected} and~\ref{lemma:connected2}.

\begin{lemma} \label{lemma:nonnegconnected}
Let~$\G=(A,B)$ be a bimatrix game with~$A,B \in \doubleQ_{\geq 0}^{n \times n}$,
and~$G$ be the graph associated with~$\G$.
If~$(x,y)$ is a minimal Nash equilibrium for~$\G$, then
either~$|S(x)|=|S(y)|=1$, or~$G[S(x) \cup S(y)]$ is connected.
\end{lemma}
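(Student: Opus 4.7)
My plan is to split on whether $A_{S(x),S(y)}$ and $B_{S(x),S(y)}$ are both zero. This dichotomy is parallel to the one used in Lemmas~\ref{lemma:connected} and~\ref{lemma:connected2}: in the all-zero situation I will show that minimality together with non-negativity already forces $|S(x)|=|S(y)|=1$, and in the other situation I will tighten the ``extended support is connected'' conclusion of Lemma~\ref{lemma:connected2} to ``$G[S(x)\cup S(y)]$ is connected''.

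For the degenerate case, I will combine the Nash conditions of Lemma~\ref{lemma:nashcondition} with $A,B\ge\mzero$: since $(Ay)_s=0$ for $s\in S(x)$, Nash optimality gives $(Ay)_i\le 0$ for every row $i$, and non-negativity together with $y_j>0$ on $S(y)$ forces $A_{*,j}=\mzero$ for every $j\in S(y)$. A symmetric argument based on condition (ii) of the same lemma gives $B_{i,*}=\mzero$ for every $i\in S(x)$. It follows that for any fixed $i_0\in S(x)$ the pure-row profile $(e_{i_0},y)$ verifies both conditions of Lemma~\ref{lemma:nashcondition} (both vectors $Ay$ and $e_{i_0}^T B$ are identically zero), so minimality collapses $S(x)$ to a single element; the same argument collapses $S(y)$.

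For the non-degenerate case, I will assume $G[S(x)\cup S(y)]$ is disconnected and derive a contradiction by exhibiting a strictly smaller Nash equilibrium. Since at least one of the two submatrices is non-zero and $G$ is bipartite, $G[S(x)\cup S(y)]$ contains an edge, so some connected component $C$ meets both $S(x)$ and $S(y)$. I will form $\hat{x},\hat{y}$ by normalising $x,y$ onto $V_{\hat{x}}:=C\cap S(x)$ and $V_{\hat{y}}:=C\cap S(y)$ with normalisers $p,q$. The key identity is that no edge of $G[S(x)\cup S(y)]$ leaves $C$, hence $A_{s,j}=B_{s,j}=0$ for $s\in V_{\hat{x}}$ and $j\in S(y)\setminus V_{\hat{y}}$, and so $(A\hat{y})_s=(Ay)_s/q$. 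For a deviating row $i$, full non-negativity of $A$ gives the upper bound $(A\hat{y})_i\le (Ay)_i/q$; combining this with $(Ay)_s\ge (Ay)_i$ from the Nash property of $(x,y)$ yields condition (i) of Lemma~\ref{lemma:nashcondition}, and condition (ii) follows symmetrically.

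The step I expect to be most delicate is the bound $(A\hat{y})_i\le (Ay)_i/q$ for rows $i$ outside $S(x)$: outside the support we have no direct control over the entries $A_{i,j}$, and this inequality relies essentially on the full non-negativity hypothesis of this subsection rather than only on the ``non-negative payoff'' hypothesis used in Lemma~\ref{lemma:connected2}. This strengthening is precisely what lets the conclusion tighten from the extended support to the support itself.
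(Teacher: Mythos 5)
Your proof is correct and takes essentially the same route as the paper's: under non-negativity your case split ($A_{S(x),S(y)}=B_{S(x),S(y)}=\mzero$ versus not) coincides with the paper's split on whether the expected outcome is zero, the degenerate case collapses to a pure equilibrium exactly as in the paper, and the non-degenerate case normalises onto a component meeting both supports with non-negativity supplying the bound $(A\hat y)_i\le (Ay)_i/q$ for arbitrary rows, just as in the paper's $qc_1\le qc_1+(1-q)c_2$ step. Your write-up is in fact slightly more careful on two small points the paper glosses over, namely identifying the zero submatrices as $A_{*,S(y)}$ and $B_{S(x),*}$ and justifying that some component meets both $S(x)$ and $S(y)$.
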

\begin{proof}
Let~$G_S := G[S(x) \cup S(y)]$; assume that~$G_S$ is not connected.
If the expected outcome is zero, then (since the entries are non-negative)
every entry in~$A_{S(x),*}$ and~$B_{*,S(y)}$ is zero, and
we get an equilibrium by selecting any single row~$i \in S(x)$ and column~$j \in S(y)$.
Otherwise, every row of~$A_{S(x),S(y)}$ and every column of~$B_{S(x),S(y)}$
contains some positive entry.
Let~$C$ be a connected component of~$G_S$ on row vertices~$V_{\hat x}$ and column vertices~$V_{\hat y}$,
and define a new pair of strategy profiles $(\hat{x},\hat{y})$ where $S(\hat{x}) = V_{\hat{x}}$ and $S(\hat{y}) = V_{\hat{y}}$, by normalizing $(x,y)$ onto $V_{\hat{x}}$ and $V_{\hat{y}}$ as in Lemma~\ref{lemma:connected2}.

We will argue that~$(\hat{x},\hat{y})$ is an equilibrium.

Let~$s \in V_{\hat x}$, and assume by way of contradiction that for some row~$i\in[n]$,
we have~$(A\hat y)_{i}>(A\hat y)_s$.  Let~$(A\hat y)_s=c_0$; by non-connectivity of~$G_S$,~$(Ay)_s=qc_0$.
Further let~$(A\hat y)_{i}=c_1$ and~$(Ay)_{i}=qc_1 + (1-q)c_2$.
Now~$(Ay)_s=qc_0 < qc_1 \leq qc_1+(1-q)c_2=(Ay)_i$, contradicting
our assumptions; the last inequality is because the entries are non-negative.
Repeating the argument symmetrically, we find that~$(\hat x,\hat y)$ is a Nash equilibrium.
\qed \end{proof}

Thus, to find an equilibrium in~$\G=(A,B)$, it suffices to search for occurrences of the support,
rather than the extended support.
Invoking Lemma~\ref{lemma:subgraph finding} directly with a bound of~$2k$ vertices
gives a running time of~$\ell^{O(k)}n^{O(1)}$.

\section{Locally Bounded Treewidth Games}
\label{section:localtreewidthgame}

\newcommand{\WLGsupportsize}{k}

Let~$\G=(A,B)$ be a given game with~$A,B\in P^{n \times n}$,
with~$P \subset \doubleQ$,~$|P| \leq \ell$,
and let~$G=G(\G)$ the graph associated with~$\G$.
In this section we will present an
algorithm that finds an equilibrium with support sizes at
most~$\WLGsupportsize$ when~$G$
comes from a graph class with locally bounded treewidth.
Note that this is a partial extension of the results of the
previous section, as graphs of bounded degree have locally
bounded treewidth, while on the other hand we assume that there is a bounded set~$P$
of only~$\ell$ different payoff values which can occur in the games.
(The case~$P=\{0,1\}$ would correspond to win-lose games.)

\begin{definition}[\cite{Eppstein1999}]
A graph class has \emph{locally bounded treewidth} if
there is a function~$f:\N\to\N$ such that
for every graph~$G:=(V,E)$ of the class, any vertex~$v \in
V$, and any~$d \in \N$, the subgraph of~$G$ induced by all vertices
within distance at most~$d$ from~$v$ has treewidth at
most~$f(d)$.
\end{definition}

We refer readers to~\cite{FlumGrohe2006} for more details on
the notion of treewidth and locally bounded treewidth. The
crucial property of locally bounded treewidth graphs in our
context is that first-order queries can be answered in FPT time
on such graphs when the parameter is the size of first-order
formula~\cite[Chapter~12.2]{FlumGrohe2006}.

For ease of presentation we show how to find an equilibrium
where both players have support size~$\WLGsupportsize$ (the
algorithm can be easily adapted to support
sizes~$\WLGsupportsize_1,\WLGsupportsize_2 \leq
\WLGsupportsize$). Let~$I$ and~$J$ be two subsets
of~$\WLGsupportsize$ elements in~$[n]$. We say that two
matrices~$A^*,B^* \in \mathbb{Q}^{k \times k}$ \emph{occur} in~$\G$ at~$(I,J)$ if~$A^* =
A_{I,J}$ and~$B^* = B_{I,J}$. The pair~$(A^*,B^*)$ forms an
\emph{equilibrium pattern} if there exists an
equilibrium~$(x,y)$ where~$(A^*,B^*)$ occur in~$\G$
at~$(S(x),S(y))$. Our algorithm will try all
possible~$\ell^{2\WLGsupportsize^2}$ pairs of
matrices~$(A^*,B^*)$, and for each such pair it will determine
whether it is an equilibrium pattern.

When does a pair of matrices~$(A^*,B^*)$ form an equilibrium
pattern? The first obvious condition is that it occurs in~$\G$
at some pair of position sets~$(I,J)$. Furthermore, by
definition of a Nash equilibrium, there is a pair of
strategies~$(x,y)$ with~$S(x)=I$ and~$S(y)=J$, such that
neither player has a better alternative. The difficulty here
lies in the fact that, even given the support~$S(y)$ of the
column player, there may be too many possible strategies for
the row player that have supports different from~$I$. To
circumvent this, we define equivalence of rows with respect to
supports~$S(y)$, and of columns with respect to
supports~$S(x)$.

\begin{definition}\label{definition:equivalentstrategies}
Let~$I,J \subseteq [n]$. Two rows~$i_1,i_2 \in [n]$
are~\emph{$J$-equivalent} if~$A_{i_1,J}=A_{i_2,J}$. Similarly,
two columns~$j_1,j_2 \in [n]$ are~\emph{$I$-equivalent}
if~$B_{I,j_1}=B_{I,j_2}$.
\end{definition}

\begin{lemma}\label{lemma:violatingstrategies}
Let~$J$ be the support of the column player. For any row
strategy~$x$ there is a row strategy~$\hat{x}$ such that:
\begin{enumerate}[(i)]
\item the support~$S(\hat{x})$ contains at most one row
    from each~$J$-equivalence class
\item and for any column strategy~$y$ with support~$J$ we
    have~$\hat{x}^TAy = x^TAy$.
\end{enumerate}
The same is true for column strategies, given a support~$I$ of
the row player.
\end{lemma}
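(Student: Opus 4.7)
The plan is to prove the lemma by exhibiting an explicit construction of $\hat{x}$, obtained by ``collapsing'' the probability mass within each $J$-equivalence class onto a single representative row. The key observation that makes the construction work is that for any column strategy $y$ with support $J$, the expected payoff $(Ay)_i = \sum_{j \in J} A_{i,j} y_j$ depends on row $i$ only through the vector $A_{i,J}$. Hence if $i_1$ and $i_2$ are $J$-equivalent, then $(Ay)_{i_1} = (Ay)_{i_2}$ for every such $y$, and it does not matter (in terms of expected payoff) which of the two the row player actually plays.

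Concretely, I would partition $[n]$ into $J$-equivalence classes $C_1, \dots, C_m$, pick a representative $r_t \in C_t$ for each class, and define
\[
\hat{x}_{r_t} \;=\; \sum_{i \in C_t} x_i \qquad \text{for each } t, \qquad \hat{x}_i = 0 \text{ otherwise.}
\]
Since the classes partition $[n]$, the vector $\hat{x}$ is a probability distribution, and by construction $S(\hat{x}) \subseteq \{r_1,\dots,r_m\}$ contains at most one row from each $J$-equivalence class, establishing condition~(i).

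For condition~(ii), I would compute directly, using the observation above:
\[
\hat{x}^T A y \;=\; \sum_{t=1}^{m} \hat{x}_{r_t} (Ay)_{r_t} \;=\; \sum_{t=1}^{m} \Bigl(\sum_{i \in C_t} x_i\Bigr) (Ay)_{r_t} \;=\; \sum_{t=1}^{m} \sum_{i \in C_t} x_i (Ay)_i \;=\; x^T A y,
\]
where the third equality uses that $(Ay)_i = (Ay)_{r_t}$ for every $i \in C_t$ whenever $S(y) \subseteq J$. The symmetric statement for a column strategy $y$ given a row support $I$ follows by an identical argument applied to $B^T$ and the $I$-equivalence classes on columns.

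There is essentially no obstacle here beyond making the indexing careful: the construction is entirely elementary once one notices that $J$-equivalence is precisely the relation on rows that identifies them on the ``visible'' coordinates under any $y$ with $S(y) = J$. The only thing to be mildly careful about is that the lemma is stated for arbitrary $x$ (not just $x$ of small support), so the partition must be over all of $[n]$ rather than only over $S(x)$, but this changes nothing in the argument since rows with $x_i = 0$ contribute nothing to the sums above.
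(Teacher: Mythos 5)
Your proof is correct and uses essentially the same idea as the paper's: both rely on the observation that $J$-equivalent rows receive identical payoffs $(Ay)_i$ against any $y$ supported on $J$, so probability mass can be moved freely within an equivalence class. The only difference is presentational — the paper merges one redundant pair at a time while you collapse each class onto a representative in a single step — and your version is, if anything, slightly cleaner.
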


For each possible equilibrium pattern~$(A^*,B^*)$ we do the
following. For each choice of
rows~$A^\dagger\subseteq P^{1\times\WLGsupportsize}$ that
do not occur in~$A^*$ and each choice of
columns~$B^\dagger\subseteq P^{\WLGsupportsize\times 1}$
that do not occur in~$B^*$,
we create two matrices
\[
C=\left( \begin{array}{ll}
A^* & \mzero \\
A^\dagger & \mzero
\end{array} \right)\mbox{ and }
D=\left( \begin{array}{ll}
B^* & B^\dagger \\
\mzero & \mzero
\end{array} \right).
\]
We use Theorem~\ref{theorem:computingnash} to see if there is an equilibrium~$(x,y)$ in the game~$(C,D)$
with~$S(x)=S(y)=[\WLGsupportsize]$. If there is such an
equilibrium, then we proceed as follows to find an occurrence
of~$(A^*,B^*)$ that avoids the rows and columns which were not
chosen. For this let~$F_1$ be the rows which occur neither
in~$A^*$ nor in~$A^\dagger$ and let~$F_2$ be the columns which
occur neither in~$B^*$ nor in~$B^\dagger$. We say that~$F_1$
and~$F_2$ are \emph{forbidden} for~$(A^*,B^*)$. We note that
given~$(A^*,B^*)$, a set of rows~$F_1 \subseteq P^{1
\times \WLGsupportsize}$, and a set of columns~$F_2 \subseteq
P^{\WLGsupportsize \times 1}$, one can write a
first-order formula of size bounded by some function
in~$\WLGsupportsize$ and~$|P|=\ell$ to determine whether~$(A^*,B^*)$ has an
occurrence which avoids~$F_1$ and~$F_2$.
\begin{example}
Consider a win-lose game~$(A,B)$ encoded into relations~$A$ and~$B$
such that~$A(r,c)$ is true iff~$A_{r,c}=1$, and likewise for~$B$.
Then a $2 \times 2$ equilibrium where the pattern is two identity matrices can be found with the formula
\begin{align*}
\exists r_1,r_2,c_1,c_2 & A(r_1,c_1) \land \neg A(r_1,c_2) \land \neg A(r_2,c_1) \land A(r_2,c_2) \land \\
&B(r_1,c_1) \land \neg B(r_1,c_2) \land \neg B(r_2,c_1) \land B(r_2,c_2) \land \\
&\forall r' (\neg A(r',c_1)\lor \neg  A(r',c_2)) \land \forall c' (\neg B(r_1,c')\lor \neg B(r_2,c')).
\end{align*}
In general, with~$\ell$ different values, there would be~$\ell-1$ relations~$A_i$ and~$B_i$
encoding the game, where~$A_i(r,c)$ is true if~$A_{r,c}=z_i$,
for every~$z_i \in P$ except the zero value.
\end{example}
Since the number of
possible choices for~$F_1$ and~$F_2$ is bounded by some
function in~$\WLGsupportsize$ and~$\ell$, and for each such choice we can
determine whether~$F_1$ and~$F_2$ is a forbidden pair
for~$(A^*,B^*)$ in polynomial-time, the total time for
determining whether~$(A^*,B^*)$ is an equilibrium pattern is
FPT in~$\WLGsupportsize$ and~$\ell$. Since the number of pairs~$(A^*,B^*)$
is also bounded by a function in~$\WLGsupportsize$, the total
running time of our entire algorithm is also FPT
in~$\WLGsupportsize$ and~$\ell$.

To complete the proof of Theorem~\ref{theorem:localtreewidthgame}, let
us briefly argue completeness. Assume that there is any
equilibrium with support sizes equal to~$\WLGsupportsize$,
let~$I$ and~$J$ be the supports, and let~$A^*$ and~$B^*$ be
corresponding sub-matrices. Observe that we may set all entries
in columns outside~$J$ of~$A$ to zero without harm, ditto for
rows outside~$I$ in~$B$. According to
Lemma~\ref{lemma:violatingstrategies} it suffices to keep one
copy of each row outside~$A^*$ in~$A$ (also discard the
corresponding zero-row in~$B$ to keep the size the same). The
same is of course true for columns outside~$B^*$ in~$B$. Except
for a permutation this is equal to one of the games~$(C,D)$
that we considered. Therefore our algorithm will find such an
equilibrium if one exists.

\section{Unbalanced Games}
\label{section:unbalancedgames}

In this section we briefly consider~$k$-unbalanced bimatrix games. A
bimatrix game~$(A,B)$ is~\emph{$k$-unbalanced} if~$A,B \in
\mathbb{Q}^{k \times n}_{\geq 0}$ for
some~$k<<n$~\cite{KalyanaramanUmans2007,LiptonMarkakisMehta2003}
(i.e., the row player has a significantly smaller number of
strategies than the column player). We will show that a Nash
equilibrium can be computed in FPT-time with respect to~$k$
and~$\ell$, where~$\ell$ denotes the number of different
payoffs that the row player has, i.e.,~$\ell:=|\{A_{i,j} : 1
\leq i \leq k, 1 \leq j \leq n\}|$.

Similar to Definition~\ref{definition:equivalentstrategies} we
define two column strategies~$i,j\in[n]$ to be equivalent
if~$A_{*,i}=A_{*,j}$. (However, notice that unlike
Definition~\ref{definition:equivalentstrategies}, here
equivalence of column strategies is defined with respect to the
row player payoff.)

\begin{lemma}
\label{lemma:equivilanceequilibrium}
For each equilibrium there is an equilibrium where the column
player plays at most one column from each equivalence class.
\end{lemma}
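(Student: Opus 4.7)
The plan is to take an arbitrary Nash equilibrium $(x,y)$ and construct from it a new equilibrium $(x,\hat{y})$ in which, within each equivalence class, all probability mass assigned by $y$ is collapsed onto a single representative column. Since $x$ is not modified at all, most of the work consists in checking both conditions of Lemma~\ref{lemma:nashcondition} for the modified strategy $\hat{y}$.

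Concretely, I would partition $[n]$ into equivalence classes under the relation $A_{*,i}=A_{*,j}$. For each class $E$ with $|E \cap S(y)| \geq 2$, I would pick any $j^* \in E \cap S(y)$ and define $\hat{y}_{j^*} := \sum_{j \in E \cap S(y)} y_j$ and $\hat{y}_j := 0$ for every other $j \in E \setminus \{j^*\}$; on classes not meeting $S(y)$, or meeting $S(y)$ in a single column, $\hat{y}$ agrees with $y$. Clearly $\hat{y}$ is still a probability distribution, and by construction it places nonzero mass on at most one column of each equivalence class.

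The key observation is that for any row index $i$, we have $(A\hat{y})_i = (Ay)_i$. Indeed, splitting the sum $\sum_j A_{i,j} y_j$ by equivalence class and using $A_{i,j} = A_{i,j^*}$ for all $j$ in the same class as $j^*$, the transfer of mass from the other columns in $E \cap S(y)$ onto $j^*$ leaves the contribution of $E$ unchanged. Therefore condition~(i) of Lemma~\ref{lemma:nashcondition} carries over from $(x,y)$ to $(x,\hat{y})$ with no further work.

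For condition~(ii) I would observe that $S(\hat{y}) \subseteq S(y)$: every column $j^*$ on which $\hat{y}$ places positive mass was already a column on which $y$ placed positive mass, hence already satisfied $(x^{T}B)_{j^*} \geq (x^{T}B)_{j}$ for all $j$ by the equilibrium property of $(x,y)$. Thus $(x,\hat{y})$ is a Nash equilibrium, and by construction $\hat{y}$ uses at most one column per equivalence class. I do not expect a real obstacle here; the only small subtlety is to ensure that the representative $j^*$ is chosen inside $S(y)$ (so that its column-player best-response inequality is inherited for free), rather than an arbitrary representative of the class which might not belong to the support.
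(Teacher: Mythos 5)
Your proposal is correct and follows essentially the same route as the paper's proof: both shift the probability mass within an equivalence class onto a representative already in $S(y)$, note that $(A\hat{y})_i=(Ay)_i$ for every row $i$ so condition~(i) of Lemma~\ref{lemma:nashcondition} is preserved, and use $S(\hat{y})\subseteq S(y)$ together with the unchanged $x$ to inherit condition~(ii). The only cosmetic difference is that you collapse each class in one step while the paper merges two equivalent columns at a time (and iterates implicitly).
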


Using Lemma~\ref{lemma:equivilanceequilibrium} we can easily
devise an FPT algorithm for computing a Nash equilibrium in our
setting. The algorithm simply guesses the support of the row
player and column player, and then uses
the method of Theorem~\ref{theorem:computingnash} to determine whether there exists
a Nash equilibrium corresponding to these sets of supports.
Observe that there are at most $\ell^k$ column-strategy
equivalence classes. Furthermore, according to Lipton,
Markakis, and Mehta~\cite{LiptonMarkakisMehta2003}, in
a~$k$-unbalanced game there always exists an equilibrium where
the column player has support size at most~$k+1$. Thus the
number of guesses the algorithm makes is bounded by~$2^k \cdot
\binom{\ell^k}{k+1}=\ell^{O(k^2)}$, and for each such guess,
the amount of time required is polynomial. This completes the
proof of Theorem~\ref{theorem:unbalancedgame}.

\bibliographystyle{plain}
\bibliography{biblio}

\newpage
\appendix

\section{Omitted proofs}

\subsection{Section~\ref{section:localtreewidthgame}}

\begin{proof}[Lemma~\ref{lemma:violatingstrategies}]
We prove the lemma row strategies given a support~$J$ of the column player. The proof for column strategies is similar. Let~$x$ be a row strategy which includes two strategies~$i_1$ and~$i_2$ that are~$J$-equivalent. It suffices to show that there is a row strategy~$\hat{x}$ with~$S(\hat{x}) = S(x) \setminus {i_2}$ and~$\hat{x}^TAy=x^TAy$ for any column strategy~$y$ with support~$J$. For this, take~$\hat{x}$ to be the strategy defined by~$\hat{x}_{i_1}:=x_{i_1}+x_{i_2}$,~$\hat{x}_{i_2}:=0$, and~$\hat{x}_i =x_i$ for all~$i\in \{1,\ldots,n\} \setminus \{i_1,i_2\}$. Let~$y$ be any column strategy with~$S(y) = J$. By definition of~$J$-equivalence, we have~$(Ay)_{i_1} = (Ay)_{i_2}$. Thus, we get that
\[
(\hat{x}^TAy)_{i_1} + (\hat{x}^TAy)_{i_2} = (x^TAy)_{i_1} + (x^TAy)_{i_2}.
\]
Furthermore, since~$\hat{x}$ and~$x$ equal on all entries~$i \in [n] \setminus \{i_1,i_2\}$, we know that
\[
\sum_{i \neq i_1,i_2} (\hat{x}^TAy)_i = \sum_{i \neq
i_1,i_2} (x^TAy)_i,
\]
and thus
\begin{align*}
(\hat{x}^TAy) & = \sum_{i \neq i_1,i_2} (\hat{x}^TAy)_i+(\hat{x}^TAy)_{i_1} + (\hat{x}^TAy)_{i_2} \\
& = \sum_{i \neq i_1,i_2} (x^TAy)_i + (x^TAy)_{i_1} + (x^TAy)_{i_2} \quad= (x^TAy).
\end{align*}
Therefore~$\hat{x}^TAy = x^TAy$ for all column strategies~$y$ with~$S(y)=J$. \qed
\end{proof}

\subsection{Section~\ref{section:unbalancedgames}}

\begin{proof}[Lemma~\ref{lemma:equivilanceequilibrium}]
Let~$(x,y)$ be an equilibrium, and suppose that~$S(y)$ includes two equivalent strategies~$i$ and~$j$. To prove the lemma it suffices to show that the there exists an equilibrium~$(x,y^*)$ with~$S(y^*) = S(y) \setminus \{j\}$. For this, let us take~$y^*$ to be the strategy vector defined by~$y^*_i:=y_i+y_j$,~$y^*_j:=0$, and~$y^*_x =y_x$ for all~$x \in \{1,\ldots,n\} \setminus \{i,j\}$. Clearly~$S(y^*) = S(y) \setminus \{j\}$.
Thus, for each~$s \in S(y^*)$ we have
$$
(x^{T}B)_{s} \geq (x^{T}B)_{j}, \forall j \neq s,
$$
since this condition holds for all~$s \in S(y) \supseteq S(y^*)$ by Lemma~\ref{lemma:nashcondition}.
Furthermore, due to the definition of equivalence, a simple calculation shows that~$(Ay)_{s} = (Ay^*)_s$ for all~$s \in \{1,\ldots,k\}$. Therefore, for each~$s \in S(x)$, we get again using Lemma~\ref{lemma:nashcondition} that
$$
(Ay^*)_{s} = (Ay)_{s} \geq (Ay)_{j}, \forall j \neq s.
$$
It follows that~$(x,y^*)$ satisfies both conditions of Lemma~\ref{lemma:nashcondition}, and so it is indeed an equilibrium. \qed
\end{proof}

\end{document}